\documentclass[lettersize,journal]{IEEEtran}
\usepackage[style=ieee, backend=biber]{biblatex}
\usepackage{amsmath,amssymb,amsfonts, amsthm}
\usepackage{array}
\usepackage[caption=false,font=normalsize,labelfont=sf,textfont=sf]{subfig}
\usepackage{textcomp}
\usepackage{stfloats}
\usepackage{url}
\usepackage{verbatim}
\usepackage{graphicx}

\usepackage{csquotes}

\usepackage[colorlinks=true, linkcolor=blue, urlcolor=blue, citecolor=blue]{hyperref}
\usepackage{tikz}

\usepackage{caption}
\usepackage{subcaption}

\usepackage{multirow}
\usepackage{booktabs}
\usepackage{diagbox}

\addbibresource{references.bib}

\usepackage[english]{babel}
\newtheorem{thm}{Theorem}

\captionsetup[figure]{justification=centering}

\usepackage{algorithm}
\usepackage[noend]{algpseudocode}



\hyphenation{op-tical net-works semi-conduc-tor IEEE-Xplore}

\begin{document}

\title{Efficient Homomorphically Encrypted Convolutional Neural Network Without Rotation}

\author{\IEEEauthorblockN{Sajjad Akherati and Xinmiao Zhang}\\
		\IEEEauthorblockA{Department of Electrical and Computer Engineering\\
The Ohio State University, OH 43210, U.S.\\
Emails: \{akherati.1, zhang.8952\}@osu.edu
	}
}



\maketitle

\begin{abstract}
Privacy-preserving neural network (NN) inference can be realized using homomorphic encryption (HE), which enables computations to be performed directly on encrypted data. Popular HE schemes are built over large polynomial rings. To allow simultaneous multiplications in the convolutional (Conv) and fully-connected (FC) layers, multiple input data are encoded into the coefficients of the same polynomial, so are the weights of NNs. However, ciphertext rotations are necessary to compute the sums of products and/or to aggregate outputs from different channels into the same polynomials. Ciphertext rotations have much higher complexity than ciphertext multiplications and contribute to the majority of the latency of HE-evaluated Conv and FC layers. This paper proposes a novel reformulated server-client joint computation procedure and a new filter coefficient packing scheme to eliminate ciphertext rotations without compromising the security of the HE scheme. Our proposed scheme also leads to substantial reductions in the number of coefficient multiplications needed and the communication cost between the server and the client. For various variants of plain-20 classifier over the CIFAR-10/100 datasets, our approach reduces the execution time of the Conv and FC layers by $\mathbf{11.8}\times$, along with decreasing the communication cost between client and server by $\mathbf{40}$\%, compared to the best prior solution.  
\end{abstract}

\begin{IEEEkeywords}
Ciphertext rotation, coefficient packing, convolutional layer, fully-connected layer, homomorphic encryption, neural network
\end{IEEEkeywords} 

\section{Introduction}
Convolutional neural networks (CNNs) have enabled the categorization of data with outstanding accuracy and have been applied across a wide range of domains, such as medical diagnosis \parencite{NN_survey, medical_diagnosis1, medical_diagnosis2}, facial recognition\parencite{facial_recognition}, financial data analysis \parencite{finance}, and cyber attack detection \parencite{cybersecurity1, cybersecurity2}. Building NN models demands significant resources. Consequently, NN inference as a service offered by cloud computing \parencite{cloud_computing} is gaining popularity. One of the main drawbacks of this solution is the privacy concerns of users. Privacy-preserving NN inference is enabled by homomorphic encryption (HE) \parencite{FHE}, which allows computations to be directly carried out over encrypted user data in the server. The results, which are also encrypted, are sent back to the user. Then, the user recovers the plaintext results after decryption.  

Popular HE schemes, such as BGV \parencite{BGV}, B/FV \parencite{BV, FV}, and CKKS \parencite{CKKS} schemes, are built over large polynomial rings. Each ciphertext consists of two polynomials, each containing thousands of coefficients with hundreds of bits per coefficient. The complexity of modular polynomial multiplication can be mitigated by the Number Theoretic Transform (NTT) \parencite{HanhoNTT, ParhiNTT}, as well as by integrating the modular reduction into the decomposed Karatsuba algorithm\parencite{PolyMultSiPS, PolyMultJourn}. Modular multiplications on the polynomial coefficients can be simplified by Barrett reduction \parencite{Barrett, ParhiFlexBarrett}, Montgomery algorithm \parencite{Montgomery}, decomposed Karatsuba multiplication \parencite{ZhangMultSiPS, ZhangMult}, and the Chinese Reminder Theorem (CRT) \parencite{ RNSCKKS}. The overall ciphertext multiplication using CRT has been further reformulated to combine computations in \parencite{Combined,Combined2}.  

 Efforts have also been spent toward simplifying the implementation of homomorphically encrypted CNNs. The earlier designs \parencite{cryptonets, CryptoDL, FasterCryptoNets, NGraph-HE, NGraph-HE2} pack data from a large number of inputs into the same polynomial. These designs achieve high throughput but incurred very long latency. To address this, the inputs to the convolutional (Conv) or fully connected (FC) layers are vectorized in the Gazelle design \parencite{Gazelle}, and the corresponding compilers are developed in \parencite{CHET, EVA, EVAimproved}. All these designs require ciphertext rotations to carry out the evaluation of Conv layers and FC layers of CNNs. To mitigate the overhead from rotations, the input is packed multiple times into the same ciphertexts in Lola \parencite{LoLa}, Falcon \parencite{FALCON_CVPR}, and ENSEI \parencite {ENSEI}. These designs exhibit lower latency than Gazelle, particularly when the number of CNN layers is much smaller than the polynomial degree. Cheetah \parencite{Cheetah} extracts the results of convolution from the coefficients of products of polynomials directly. The results of each channel are located in a different ciphertext, and all ciphertexts are sent to the client. Although no ciphertext rotation is needed, it incurs high communication overhead due to the large number of ciphertexts sent from the server to the client. A server-client protocol was recently proposed in Nimbus \cite{nimbus} that enables efficient HE evaluation of Transformers by sending encrypted weights to the client, thereby reducing the communication and computation cost of matrix multiplication. In CNNs, the kernels are much smaller, and the batch convolution involves the dot product of filters with windowed inputs as the window moves across the input image for different output pixels. As a result, the Nimbus approach becomes inefficient in this case. ConvFHE \parencite{ConvFHE} adopts the same idea in Cheetah for computing the convolutions. However, the results of different channels are packed into the same ciphertext utilizing a process that needs ciphertext rotations.


The activation functions of CNNs, such as the rectified linear unit (ReLU), are nonlinear. They can be implemented using either two-party computation (2-PC) protocols, such as Yao's Garbled Circuits (GC) \parencite{GC} and Oblivious Transfer (OT) \parencite{OT}, or approximations over the ciphertexts \parencite{Minimax_approx}. The GC protocol has been simplified in \parencite{FreeXOR, RowReduction, HalfGate, fixed-key-cipher, TinyGarble} and optimizations of OT have been developed in \cite{OT-Extend, VOLE-OT}. Higher-order approximations of ReLU \parencite{Minimax_approx} reduce the precision loss of CNNs but at the same time increase the complexity of evaluation. Although ReLU can also be expressed as Boolean logic functions and implemented directly by using the Torus fully (TF-) HE scheme \parencite{TFHE1, TFHE2, SHE}, $100\times$ larger ciphertext size is required to achieve the same level of security.

Without compromising the security of the HE scheme, this paper introduces a novel reformulated server-client joint computation procedure and a new CNN weight coefficient packing scheme to completely eliminate rotations from the evaluations of Conv and FC layers with low communication cost requirement. By transmitting using one instead of both polynomials of each ciphertext, there is a one-to-one correspondence between the coefficients of the polynomial in the ciphertext and those of the plaintext. As a result, individual coefficients can be extracted without involving ciphertext rotations. Moreover, this one-polynomial method also substantially reduces the polynomial multiplication complexity and the amount of data to communicate between the server and client. Our new weight coefficient packing scheme puts the convolution results of different channels for a Conv layer and different entries of the output vector for a FC layer into adjacent slots of output polynomials. Accordingly, no ciphertext rotation is needed either to collect the results from different ciphertexts. Analysis has been carried out to prove that our reformulated procedure does not compromise the security of the HE scheme or increase the noise level in the ciphertexts. Although our method requires the use of OT for implementing activation functions and proceeding to the next layer, it substantially improves the overall performance of CNN inference in terms of runtime and communication overhead. For various plain-20 classifiers over the CIFAR-10/100 datasets, our design reduces the running time of the Conv and FC layers by ${11.8\times}$ and the communication cost between the client and server by ${40}\%$ compared to the ConvFHE design \cite{ConvFHE}, which is among the best prior designs. Furthermore, when evaluating the ResNet-50 classifier over the ImageNet dataset, our method yields a $4.3\times$ reduction in the latency of linear layers and a $17.1\times$ reduction in communication cost compared to Cheetah \parencite{Cheetah}.

The rest of the paper is organized as follows. Section II provides background knowledge. Section III proposes our joint server-client evaluation procedure and new weight coefficient packing scheme. Section IV presents experimental results and comparisons. Conclusions follow in Section V.

\section{Preliminaries}
This section reviews some essential information on the HE, CNN, and previous packing schemes. Similar to the previous designs \parencite{Gazelle, ConvFHE, ENSEI, FALCON_CVPR, SHE, Cheetah, CHET}, it is assumed that the filters of the CNNs are in plaintext for the evaluation. 

\subsection{CKKS Homomorphic Encryption Scheme}
This paper considers the CKKS scheme \parencite{CKKS}, which is more efficient than other HE schemes, such as BGV and B/FV. It is defined over the polynomial ring $\mathcal{R}_Q:=\mathbb{Z}_Q(X)/(X^N+1)$. The polynomial coefficients are integers mod $Q$. Each polynomial has a degree up to $N-1$, and modular reduction by $X^N+1$ is carried out after polynomial multiplications. The modulus $Q$ should have hundreds of bits and $N$ needs to be in the scale of thousands to achieve a sufficient security level before invoking the costly bootstrapping process to reset the noise level.

Let $\mathit{DG}(\sigma^2)$ represent the Gaussian distribution with variance $\sigma^2$. Each user in the CKKS scheme has a secret key and a public key:
\begin{itemize}
    \item \textbf{Secret key $\mathbf{s(X)}$.} It is a polynomial of degree $N-1$ whose coefficients are randomly selected from $\{0, \pm 1\}$ with the sparsity specified by the target level of security.
    \item \textbf{Public Key $\mathbf{pk = (b(X), a(X))\in \mathcal{R}_Q^2}$.} Here $a(X)$ is a random polynomial from the ring $\mathcal{R}_Q$, and $b(X) = -a(X)s(X)+e(X)\mod Q$, where $e(X)$ is a random polynomial in $\mathcal{R}_Q$ whose coefficients follow the $\mathit{DG}(\sigma^2)$ distribution.
\end{itemize}
The ciphertext, $[m]$, of a plaintext polynomial, $m(X)\in \mathcal{R}$ in the CKKS scheme consists of two polynomials, $[m]=(c^m_0(X),c^m_1(X))$. The CKKS encryption, decryption, ciphertext addition, and multiplication are carried out as follows:
\begin{itemize}
    \item {\bf Encryption}: Let $v(X)$ be a polynomial whose coefficients are sampled from 0, 1, -1 with probability 1/2, 1/4, and 1/4, respectively. Generate random polynomials $e_0(X)$ and $e_1(X)$ from $\mathcal{R}_Q$ following the $\mathit{DG}(\sigma^2)$ distribution. Using a large scalar, $\Delta$, which can be a power of two for simplifying the hardware implementation, $m(X)$ is encrypted into 
    \begin{align}
    [m]\!&=\!(c^m_0(X),c^m_1(X))\nonumber\\
    &=\!v(\!X\!)\!\cdot\! pk\!+\!(\lceil\!\Delta m(\!X\!)\!\rfloor\!\!+\!\!e_0(\!X\!), e_1(\!X\!))\!\mod Q. 
        \label{eq:encryption}
    \end{align}
    \item {\bf Decryption}: For the ciphertext $[m]=(c^m_0(X),c^m_1(X))$, the decrypted message is 
    \begin{equation}
        \label{eq:decryption}
    \lfloor \Delta ^{-1}((c^m_0(X)+c^m_1(X)s(X))\mod Q)\rceil.
    \end{equation}
    \item {\bf Ciphertext Addition}: For the ciphertexts $[m_0]=(c^{m,0}_0(X),c^{m,0}_1(X))$ and $[m_1]=(c^{m,1}_0(X),c^{m,1}_1(X))$, their sum is $[m^+]=(c^{m,0}_0(X)+c^{m,1}_0(X), c^{m,0}_1(X)+c^{m,1}_1(X))\mod Q$. 
    \item {\bf Ciphertext-Plaintext Multiplication}: The product of a ciphertext $[m]=(c^m_0(X),c^m_1(X))$ and plaintext $p(X)$ is $[m^*]$ $=(c^m_0(X)p(X),c^m_1(X)p(X))\mod (X^N+1)$.
\end{itemize}

\subsection{Convolutional Neural Networks}
CNNs process input data through a series of linear and non-linear layers to categorize it into one of several possible classes. An example CNN is shown in Fig. \ref{fig:CNN-example}.\\

The linear layers, shown in Fig. \ref{fig:CNN-example} in blue, have two types: Conv and FC layers. Assume that a Conv layer has $c_i$ input channels and $c_o$ output channels. Let the dimension of each input image be $w_i\times h_i$. Each filter of the Conv layer is of dimension $f_w\times f_h$. For one input image, $I$, and one filter, $F$, the computation carried out in the Conv layer is formulated as:
\begin{align}\label{eq:siso_conv_formula}
    \mathbf{Conv}(I,F)_{k,l} \!:=\! (I*F)_{k,l} \!:=\!\! \sum_{\substack{0 \leq k^\prime < f_w \\ 0 \leq l^\prime < f_h}}\!\!F_{k^\prime,l^\prime}\cdot I_{k+k^\prime,l+l^\prime},
\end{align}
where the subscripts $k$ and $l$ denote the row and column indices, respectively, in the output. For stride-$(s_w, s_h)$ and the valid scheme, the output of each convolution has size $(w_o,h_o)$, where $w_o=\lfloor (w_i-f_w+1)/s_w\rfloor$ and $h_o=\lfloor(h_i-f_h+1)/s_h\rfloor$. Let the input of the $m$-th channel be $I^{(m)}\in\mathbb{Z}^{w_i\times h_i}$. Denote the filter for input channel $m$ and output channel $n$ by $F^{(m,n)}\in\mathbb{Z}^{f_w\times f_h}$. For the $n$-th output channel, the convolution results with the images in all input channels are added up to derive the output as:
\begin{align}\label{eq:batch_conv_formula}
    \mathbf{Conv}(I^{(\cdot)}, F^{(\cdot,n)})&:=
    \sum_{0\leq m<c_i} \mathbf{ Conv}(I^{(m)}, F^{(m,n)})\nonumber\\
    &=\sum_{0\leq m<c_i}I^{(m)}*F^{(m,n)}.
\end{align}

A FC layer that has $n_i$ inputs and $n_o$ outputs is specified by a weight matrix $W$ of dimension 
$n_o\times n_i$ and a $n_o$-entry bias vector $B$. Its output for the input vector $I$ is $WI+B$.

The non-linear layers of the CNN are depicted in green color in Fig. \ref{fig:CNN-example}. They include activation and pooling functions. Activation functions operate on each input element individually, while pooling functions reduce the size in the output. The most frequently used activation function is ReLU, whose output for the input $X$ is $Y=(X+sign(X)X)/2$. The most common pooling function is max pooling, and it can be implemented using the ReLU.

\begin{figure*}[t]
		\centering
		\includegraphics[scale=0.9]{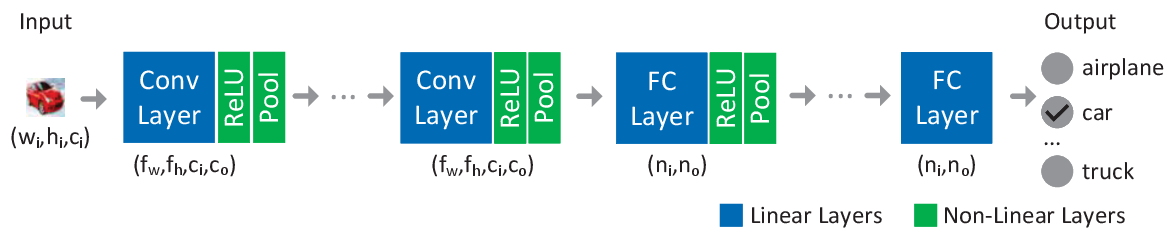}
		\caption{A CNN using the ReLU activation function.}\label{fig:CNN-example}
\end{figure*}

\subsection{Encoding and Packing}
Each ciphertext polynomial in the CKKS scheme has thousands of coefficients. It is unrealistic to encrypt each data into a separate ciphertext. Instead, multiple data can be encoded and packed into the same ciphertext, so that a single ciphertext operation corresponds to computations on multiple plaintext data. However, existing schemes need to add up different slots in the same ciphertext in order to carry out \eqref{eq:siso_conv_formula} and \eqref{eq:batch_conv_formula} or extracting entries from multiple ciphertexts into a single ciphertext. These processes require ciphertext rotations or similar operations, which are much more complicated than ciphertext multiplications and contribute to the majority of HE CNN evaluation latency.
 
\begin{figure}[t]
		\centering
		\includegraphics[scale=0.4]{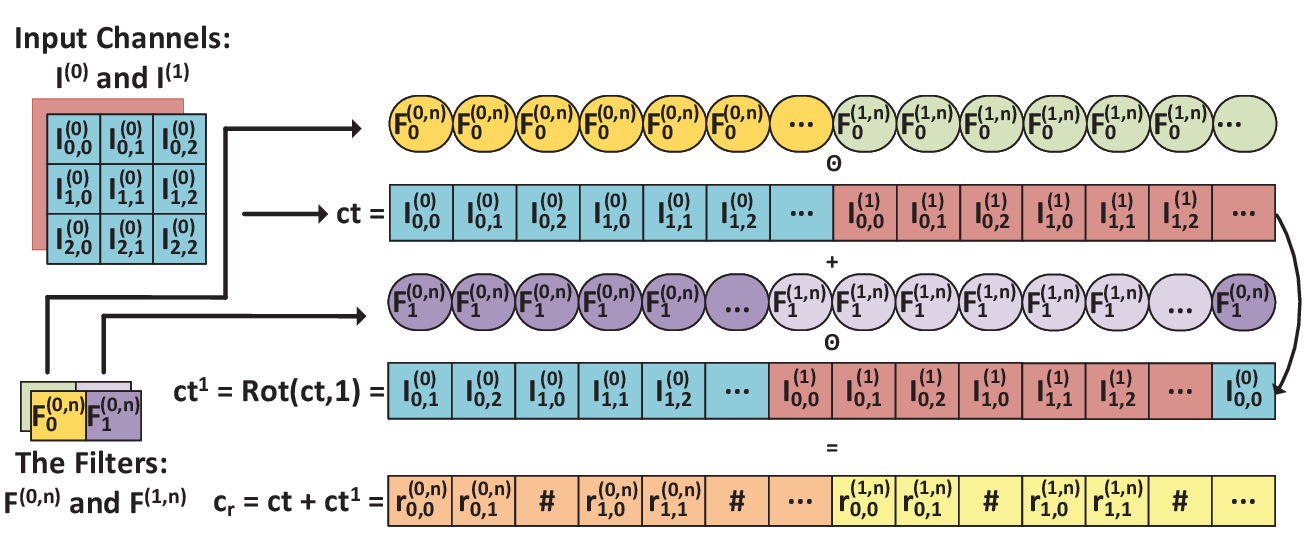}
		\caption{Conv layer using vectorized encoding with $c_i=2$, $w_i=h_i=3$, $f_w=2$, $f_h=1$, stride one, and valid padding for the output channel $n$ .} \label{fig:vec_encoding}
\end{figure}

The design of Gazelle \parencite{Gazelle} introduced the vectorized encoding method, which is also utilized in \parencite{CHET, EVA, EVAimproved}. Fig. \ref{fig:vec_encoding} illustrates an example Conv layer with $c_i=2$, $w_i=h_i=3$, $f_w=2$, $f_h=1$, stride one, and valid padding for the output channel $n$. The plaintext coefficients of the two filters are expanded according to the convolution pattern with stride one and mapped onto different polynomials. The data of two input channels are mapped to the same ciphertext. They are rotated and multiplied with the filter polynomials to compute \eqref{eq:siso_conv_formula}. As shown in the last row of this figure, the results of two different input channels are located in the same ciphertext. It needs to be rotated by $w_ih_i$ slots and added to itself to carry out \eqref{eq:batch_conv_formula}.

A spectral encoding method was employed in \parencite{Falcon,ENSEI, FALCON_CVPR}, which eliminates the need for ciphertext rotations when computing \eqref{eq:siso_conv_formula} by extracting convolution results directly from polynomial multiplications. However, it needs discrete Fourier transform (DFT) over ciphertexts, which leads to accuracy loss and high complexity. The most efficient existing HE convolution was proposed by Cheetah \cite{Cheetah} and a similar scheme was utilized in ConvFHE \parencite{ConvFHE}. Without loss of generality, stride 1 is considered in the following. ConvFHE defines $i^{(m)}(X):=\sum_{0\leq k<w_i,0\leq l<h_i} I^{(m)}_{k,l} X^{(k-f_w)h_i+l}\mod (X^N+1)$ and $f^{(m,n)}(X):=\sum_{0\leq k<f_w, 0\leq l<f_h}\Delta F^{(m,n)}_{k,l} X^{h_if_w-(kh_i+l)}\mod (X^N+1)$. It is assumed that the filter weights have been normalized to the range of $[-1,1]$, and hence they are scaled by $\Delta$ in this packing scheme. Then the $(kh_i+l)$-th coefficient of $i^{(m)}(X)\cdot f^{(m,n)}(X)$ equals $(I^{(m)},F^{(m,n)})_{k,l}$ and ciphertext rotation is not needed to calculate \eqref{eq:siso_conv_formula}. Let $i^{(m), <s>}(X)=i^{(m)}(X^s)$ and $f^{(m,n),<s>}(X) = f^{(m,n)}(X^s)$. Naturally, the $s(kh_i+l)$-th coefficient of $i^{(m),<s>}(X)\cdot f^{(m,n),<s>}(X)$ is $\mathbf{Conv}(I^{(m)},F^{(m,n)})_{k,l}$. To compute the batch convolution for the $n$-th output channel in \eqref{eq:batch_conv_formula}, define 
\begin{align}
    &i(X) = \sum_{0\leq m<c_i} i^{(m),<s>}(X)X^m,\label{eq:i_batch}\\
    &f^{(n)}(X) = \sum_{0\leq m<c_i}f^{(m,n),<s>}(X)X^{-m}\label{eq:k_batch}.
\end{align}
Assuming $N=\text{max}(w_ih_is,f_wf_hs)$, it was shown in \parencite{ConvFHE} that, when $s\geq c_i$, the $s(kh_i+l)$-th coefficient of $r^{(n)}(X)=i(X)\cdot f^{(n)}(X)$ is $\mathbf{ Conv}(I^{(\cdot)}, F^{(\cdot, n)})_{k,l}$ in \eqref{eq:batch_conv_formula}. An example of computing one output batch for one Conv layer with $c_i=4$, $w_i=h_i=2$, $f_w=2$, $f_h=1$, and valid padding for the output channel $n$ in the ConvFHE scheme is illustrated in Fig. \ref{fig:batch_conv_example}. The `\#' in this figure denotes invalid slots. The output of the Conv layer needs to be in the same format as the input to continue with the next layer. To achieve this goal, the PackLWE algorithm \parencite{PackLWE}, which involves computations similar to the ciphertext rotations, is employed in the ConvFHE scheme to put the valid coefficients from different output channels into the same ciphertext. In Cheetah, the ciphertext for each output channel is sent to the client to carry out the non-linear layers using VOLE-style OT \cite{VOLE-OT}, through which the valid coefficients from different output channels are packed into the same ciphertext. Although this approach eliminates the need for rotations, it results in a significant number of ciphertexts being transmitted from the server to the client, leading to high communication overhead..

\begin{figure}[t]
		\centering
		\includegraphics[scale=0.38]{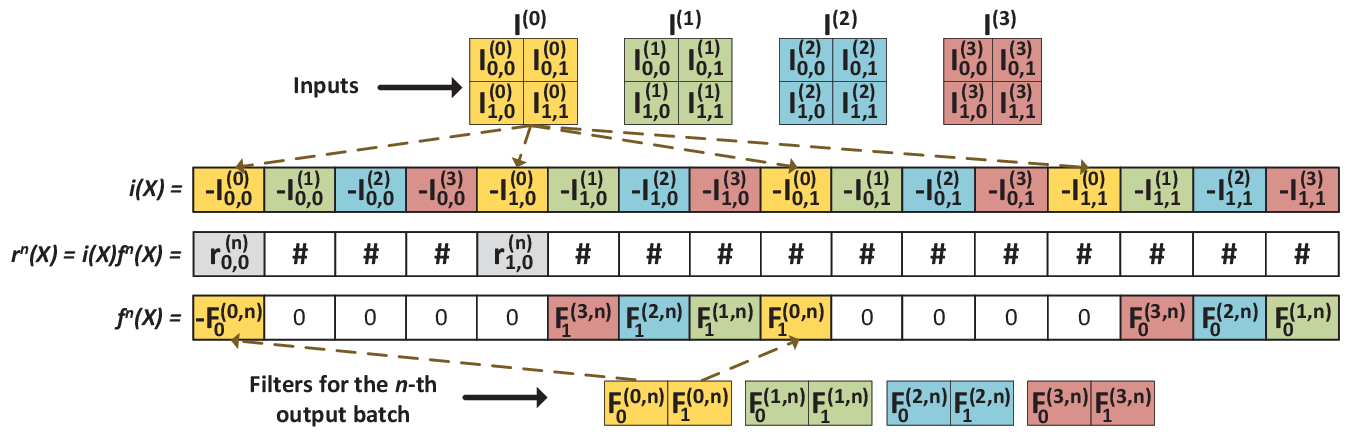}
		\caption{Example of computing one output batch for one convolutional layer with $c_i=4$, $w_i=h_i=2$, $f_w=2$, $f_h=1$, and valid padding for the output channel $n$ in the ConvFHE scheme \parencite{ConvFHE}.} \label{fig:batch_conv_example}
\end{figure}

An FC layer can be described as a weight matrix multiplication. The hybrid scheme in \parencite{Gazelle} maps the entries of the matrix in extended diagonals to the same polynomial. The ciphertext corresponding to the input is rotated and multiplied with the weighted matrix polynomials. The sums of the results go through similar rotate-and-add operations to derive the outputs. Similar to the Conv layer, the Cheetah \cite{Cheetah} and ConvFHE \cite{ConvFHE} designs can eliminate the rotations for weight matrix multiplication. However, the weights are packed into different plaintexts and the multiplication results are also located in different ciphertexts. Putting the results together needs sending all ciphertexts to the client or an extended version of the PackLWE procedure.

\section{HE CNN without Ciphertext Rotation}
This section introduces a reformulated server-client joint computation procedure and a new weight coefficient packing scheme to completely eliminate ciphertext rotations from the HE evaluation of CNNs without incurring large communication costs between the server and client. The proposed scheme can be applied to both Conv and FC layers. This section also provides analysis showing that our reformulated procedure does not compromise the security of the HE scheme or increase the noise level in the ciphertexts.

\subsection{Conv Layer Evaluation without Ciphertext Rotation}
In ConvFHE \parencite{ConvFHE}, the slots whose indices are not multiple of $c_i$ in the product of input ciphertext and filters, $[r^{(n)}(X)]=[i(X)\cdot f^{(n)}(X)]$, as well as those in the plaintext corresponding to the product, $r^{(n)}(X)$, are invalid as shown in the example in Fig. \ref{fig:batch_conv_example}. Invalid slots in $c_1(X)$ can not be directly replaced by zeros. This is because a ciphertext $(c_0(X), c_1(X))$ is decrypted to $\lfloor \Delta ^{-1}((c_0(X)+c_1(X)s(X))\mod Q)\rceil$ according to \eqref{eq:decryption}. If the invalid entries of $c_1(X)$ are replaced by zeros, $c_1(X)s(X)$ will be different and the valid entries in the decryption results will be changed. ConvFHE uses the PackLWE \parencite{PackLWE} algorithm to find another ciphertext whose corresponding plaintext, $r^{'(n)}(X)$, equals $r^{(n)}(X)$ in the valid slots but has negative values in those invalid entries. The PackLWE involves very similar computations as ciphertext rotations, leading to long latency. The plaintext $r^{(n)}(X)+r^{'(n)}(X)$ has zeros in the invalid entries. Hence, rotating $[r^{(n)}(X)]+[r^{'(n)}(X)]$ by $n$ slots and adding up these results for $0\leq n<c_o$, the outputs of different channels are put into adjacent slots in the same ciphertext. It has the same format as the Conv layer input so that the next layer of CNN can continue. 

Our design proposes to extract the valid entries of $[r^{(n)}(X)]$ directly, without changing the decryption result. The intuition is that, from \eqref{eq:decryption}, there is a one-to-one correspondence between the slots in $c_0(X)$ and those in the corresponding plaintext. Hence, the valid slots in $c_0(X)$ can be extracted directly. Although $c_1(X)$ is also required for decryption, it can be handled on the client side. Furthermore, our new design introduces a novel filter packing scheme that eliminates the rotations needed to align the valid slots from different $[r^{(n)}(X)]$ into adjacent slots of the same ciphertext. Our proposed procedure for one Conv layer is outlined in Algorithm \ref{alg:proposed_ext}, assuming that the data from all input channels can be packed into one plaintext polynomial and all Conv layer filters corresponding to one output channel can also be packed into a single plaintext polynomial. In this algorithm, $U(0,\Delta)$ denotes a uniform distribution over the range of $[0,\Delta]$,  and the additions and multiplications on the polynomial coefficients are always followed by modular reductions with the ciphertext modulus. Besides, modular reduction by $X^N+1$ is carried out after every polynomial multiplication. For simplicity, these notations are omitted in Algorithm \ref{alg:proposed_ext}. Our method adopts a modified $c_1(X)$ computation involving the client. The security analysis is presented in Section \ref{sec:security_analysis}. 

\begin{algorithm}[h]
\small	\caption{Proposed HE convolution procedure} \label{alg:proposed_ext}
\begin{algorithmic}[1]
\State \textbf{Client Input}: $s(X), pk, i(X)$
\State \textbf{Server Input}: $pk,\hat f^{(n)}(X)$
        \State \textbf{Server initialization:} 
        \State \indent\hspace{0em} \parbox[t]{\dimexpr\linewidth-\algorithmicindent}{\raggedright 
 Sample random polynomials $\rho^{(n)}(X)$ $(0 \leq n < c_o)$ with coefficients following the $U(0,\Delta)$ distribution;}
        
        \State \indent\hspace{0em} \parbox[t]{\dimexpr\linewidth-\algorithmicindent}{\raggedright 
 Sample random noise polynomials $e^{(n)}_{1}(X)$ and $e_2^{(n)}(X)$ $(0\leq n< c_o)$ with coefficients following the  $DG(\sigma^2)$ distributions;}
        
        \State \indent\hspace{0em} \parbox[t]{\dimexpr\linewidth-\algorithmicindent}{\raggedright

    Compute $p^{(n)}_1(X) =  \rho^{(n)}(X)b(X) + e^{(n)}_{1}(X)$;}
    \State \indent\hspace{0em} \parbox[t]{\dimexpr\linewidth-\algorithmicindent}{Compute $p^{(n)}_2(X) = \hat f^{(n)}(X)+\rho^{(n)}(X)+e^{(n)}_2(X)$;}

        \State \hspace{2em} Send $p^{(n)}_1(X)$ and $p^{(n)}_2(X)$ to the client.\\
        \State \textbf{Encryption (client side)}
        \State \hspace{2em} \parbox[t]{\dimexpr\linewidth-\algorithmicindent}{\raggedright Compute the $c_0(X)$ of $-i(X)$ encryption according to \eqref{eq:encryption};}
        \State \hspace{2em} Store the random $v(X)$ in the client;
        \State \hspace{2em} Send $c_0(X)$ to the server.
        \State {\bf Evaluation (server side)}
        \State \hspace{2em} \textbf{for} $0\leq n< c_o$ \textbf{do}
        \State \hspace{2em} \hspace{2em} $c_0^{(n)}(X)\gets c_0(X)\cdot  \rho^{(n)}(X)-sh(X)\cdot\hat f^{(n)}(X)$
        \State \hspace{2em} \hspace{2em} 
        $c_0^{(n)}(X) \gets \Delta^{-1}c_0^{(n)}(X)$
        \State \hspace{2em} \textbf{for} $j=0$; $j<N$; $j=j+c_i$ \textbf{do}
        \State \hspace{2em} \hspace{2em} \textbf{for} $0\leq n < c_o$ \textbf{do}
        \State \hspace{2em} \hspace{2em} \hspace{2em}  $c^r_{0,j+nc_i/c_o}\gets c^{(n)}_{0, j+nc_i/c_o}$
        \State \hspace{2em} \parbox[t]{\dimexpr\linewidth-\algorithmicindent}{\raggedright Sample random polynomials $sh'(X)$ with coefficients following $U(0,\Delta)$ distribution;}
          \State \hspace{2em} \parbox[t]{\dimexpr\linewidth-\algorithmicindent}{\raggedright Store $sh'(X)$ to be used as the $sh(X)$ for the next layer;}
        \State \hspace{2em} $c^r_0(X) \gets c^r_0(X) + sh^\prime(X)$
        \State  \hspace{2em} Send $c^r_0(X)$ to the client.      
        \State {\bf Evaluation \& decryption (client side)}
        \State \hspace{2em} \textbf{for} $0\leq n< c_o$ \textbf{do}
        \State \hspace{2em}\hspace{2em} $\ c_1^{(n)}(X)\gets v(X)\cdot p^{(n)}_1(X)+i(X)\cdot p^{(n)}_2(X)$
        \State \hspace{2em} \hspace{2em} 
        $c_1^{(n)}(X)\gets \Delta^{-1}c_1^{(n)}(X)$
        \State \hspace{2em} \textbf{for} $j=0$; $j<N$; $j=j+c_i$ \textbf{do}
        \State \hspace{2em} \hspace{2em} \textbf{for} $0\leq n < c_o$ \textbf{do}
         \State \hspace{2em} \hspace{2em}\hspace{2em} $c^r_{1,j+nc_i/c_o}\gets c^{(n)}_{1,j+nc_i/c_o}$
        \State \hspace{2em} $r(X) \gets c_0^r(X)+c_1^r(X)$ 
\end{algorithmic}
\normalsize
\end{algorithm} 

For the first convolutional layer, $I^{(m)}$ $(0\leq m<c_i)$ are packed into $i(X)$ according to \eqref{eq:i_batch}. In succeeding layers, $i(X)$ is the $r(X)$ output by the previous layer. The secret and public keys are generated by the client. The server receives the public key from the client and has the CNN filters, $F^{(m,n)}$ ($0\leq m<c_i, 0\leq n<c_o$), which are packed into $\hat f^{(n)}(X)$ according to \eqref{eq:k_batch_new}. The server also computes $p^{(n)}_1(X)$ and $p^{(n)}_2(X)$ according to Lines 7-8 of Algorithm \ref{alg:proposed_ext}, which are needed to calculate the $c_1(X)$ part of the convolution results in the client without revealing any information about the filters to the client. 

Starting from the plaintext data, the client carries out encryption but only computes the $c_0(X)$ part of the ciphertext. The random polynomial, $v(X)$, generated in the encryption process is stored, since it is needed to compute the $c_1(X)$ part from $p^{(n)}_1(X)$ as shown in Line 27 of Algorithm \ref{alg:proposed_ext}.

To prevent the client from getting information about $\hat f^{(n)}(X)$ from $r(X)$, a random share $sh'(X)$ is added to $c_0^r(X)$ at each layer, except the last one \parencite{Gazelle}. To offset the contribution of this extra share in the evaluation of subsequent layers, $sh'(X)$ is stored and later used as $sh(X)$ in Line 16 of the Algorithm. Using the secret share $sh(X)$, the $r(X)$ output from Algorithm \ref{alg:proposed_ext} actually equals the evaluation result of the layer added with $sh(X)$. Hence, the input to the second and later layers actually is $i(X)+sh(X)$. Encrypting the negation of this input using \eqref{eq:encryption}, to obtain $c_0(X)$, and then plugging it into the formula in Line 16, taking into account the rescaling in Line 17 and random share addition in Line 23 of Algorithm \ref{alg:proposed_ext}, it can be derived that
\begin{align}
    c_0^{(n)}(\!X\!) \!\!&=\!\! \Delta^{-1}\!(\!-i(\!X\!)\rho^{(n)}(\!X\!)\!-\!sh(\!X\!)\rho^{(n)}\!(\!X\!)\!-\!sh(\!X\!)\!\hat{f}^{(n)}(\!X\!)\nonumber\\
    &\quad +\!v(\!X\!)\rho^{(\!n\!)}(\!X\!)a(\!X\!)s(\!X\!)\!+\!\rho^{(\!n\!)}\!(\!X\!)e_1(\!X\!)\!)\!+\!sh'(\!X\!). \label{eq: c0^(n)(x)}
\end{align}
Equivalently, it can also be derived that, following Lines 27 and 28 of Algorithm \ref{alg:proposed_ext}, at the end of the server evaluation,
\begin{align}
    c_1^{(n)}(X) 
    &= \Delta^{-1}(i(X) \hat{f}^{(n)}(X)-v(X)\rho^{(n)}(X)s(X)a(X)\nonumber\\
    &\quad +\! i(\!X\!)\rho^{(n)}\!(\!X\!)\!+\!sh(\!X\!)\hat f^{(n)}(\!X\!)\!+\!sh(\!X\!)\rho^{(n)}(\!X\!)\nonumber\\
    &\quad+v(X)r^{(n)}(X)e(X)+v(X)e_{1}^{(n)}(X)\nonumber\\
    &\quad + i(X)e_2^{(n)}(X)+sh(X)e_2^{(n)}(X)). \label{eq: c1^(n)(x)}
\end{align}
Adding up the above two equations,
\begin{align}
    c_0^{(n)}(X) + c_1^{(n)}(X) &=\Delta^{-1}(i(X)\hat f^{(n)}(X) + \rho^{(n)}(X)e_1(X)\nonumber \\
    &\quad + v(X)\rho^{(n)}(X)e(X)+v(X)e_{1}^{(n)}(X)\nonumber\\
    &\quad + i(X)e_2^{(n)}(X)+sh(X)e_2^{(n)}(X))\nonumber\\
    &\quad+sh^\prime(X). \label{eq: c0+c1}
\end{align}
In the parenthesis of the above formula, the other terms are much smaller than $i(X)\hat f^{(n)}(X)$, and are considered as the approximation errors. Hence, $\Delta^{-1}(i(X)\hat f^{(n)}(X))+sh'(X)$ is recovered. This expression represents the scaled output of the layer added up with the secret share $sh'(X)$.

In CNNs, $c_o\leq c_i$ because the Conv layers extract features.
Assuming $c_i$ is an integer multiple of $c_o$, our scheme packs the coefficients of all the filters belonging to the same output channel into
\begin{align}
    \hat f^{(\!n\!)}(\!X\!) \!=\!\! \sum_{0\leq m<c_i}\!\!f^{(\!m\!,\!n\!),<\!c_i\!>}(\!X\!)X^{n\tfrac{c_i}{c_o}-m} \!\!=\! X^{n\tfrac{c_i}{c_o}}f^{(n)}(\!X\!). \label{eq:k_batch_new}
\end{align}
As a result, the output of channel $n$ is
\begin{align}
    r^{(n)}(X) \!=\! i(X)\cdot \hat f^{(n)}(X) \!=\! X^{nc_i/c_o}i(X)\cdot f^{(n)}(X).
\end{align}
Since the valid entries of $i(X)\cdot f^{(n)}(X)$ are located at slots $kc_i$ ($0\leq k<N/c_i$) in the ConvFHE design, our new filter coefficient packing scheme arranges the valid coefficients of $i(X)\hat f^{(n)}(X)$ at slots $kc_i+nc_i/c_o$. Accordingly, the encrypted valid coefficients are also located at slots $kc_i+nc_i/c_o$ in the polynomials $c_0^{(n)}(X)$. As a result, no rotation is needed, and these coefficients can be directly collected, as shown in Lines 18-20 of Algorithm \ref{alg:proposed_ext}, to form a ciphertext polynomial $c^r_0(X)$, whose corresponding plaintext polynomial consists of the results of different output channels. In Line 20, $c^r_{0,l}$ and $c^{(n)}_{0,l}$ represent the coefficients of $X^l$ in $c^r_0(X)$ and $c^{(n)}_0(X)$, respectively.  Next, the resulting ciphertext is sent from the server to the client. When $c_o<c_i$, in every group of $c_i$ slots in $c_0^r(x)$, $c_i-c_o$ slots are invalid and and can be omitted during transmission to reduce communication cost.


Using our new packing scheme to generate $\hat f^{(n)}(x)$, the polynomial $c_1^{(n)}(X)$, computed in Line 28 of Algorithm \ref{alg:proposed_ext}, also contains valid coefficients at slots $kc_i+nc_i/c_o$ ($0\leq k<N/c_i$). The valid coefficients are collected to form $c^r_1(X)$. It is then added to $c^r_0(X)$ to obtain $r(X)$, which serves as the plaintext polynomial for the convolution result and it has packed the output in the same format as the input polynomial $i(X)$. Specifically, the ($kc_i+nc_i/c_o$)-th coefficient of $r(X)$  corresponds to the same coefficient of $c_0^{(n)}(X)+c_1^{(n)}(X)$ in \eqref{eq: c0+c1}. 

\begin{figure}
     \centering
     \includegraphics[scale=0.55]{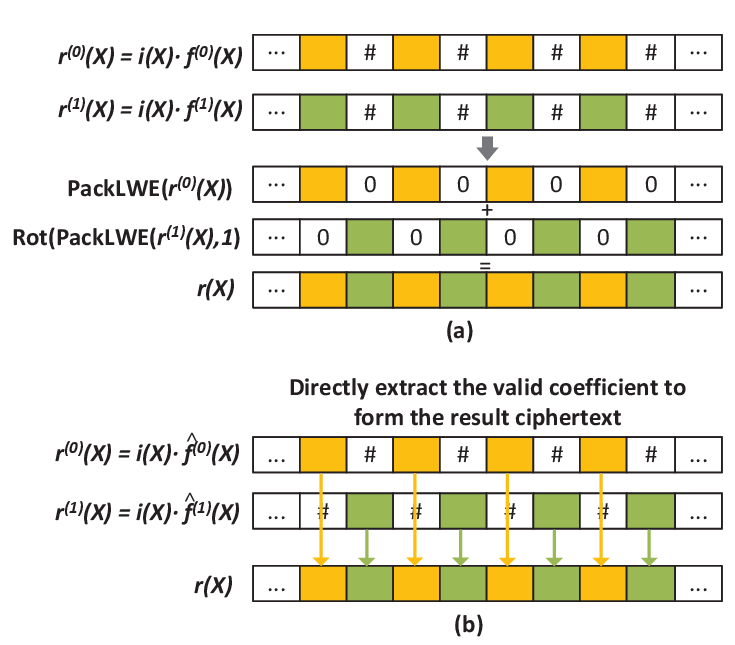}
     \caption{Locations of valid coefficients and formations of convolution result ciphertext in a) ConvFHE scheme \parencite{ConvFHE}; b) our proposed scheme for the case of $c_i=c_o=2$. (`\#' denotes invalid coefficients.)}
     \label{fig:batch_conv_distribution}
\end{figure}

Only the valid slots of $c_0^{(n)}(X)$ are taken to form $c^r_0(X)$ in Line 20 of Algorithm \ref{alg:proposed_ext}. Hence, only those slots need to be computed from the polynomial multiplication of $c_0(X)\rho^{(n)}(X)$ and $sh(X)\hat f^{(n)}(X)$ in Line 16.  Similarly, the polynomial multiplications for generating $c_1^r(X)$ in Line 27 of Algorithm \ref{alg:proposed_ext} can also be simplified.

Fig. \ref{fig:batch_conv_distribution} illustrates the differences in the locations of valid slots and how they are collected into the same ciphertext in our proposed scheme comapred to the ConvFHE design, for the case where $c_i=c_o=2$. Unlike the ConvFHE design, our proposed scheme does not require the expensive PackLWE algorithm to zero out invalid coefficients, nor does it need ciphertext rotations to collect the valid coefficients from multiple ciphertexts.

\subsection{FC Layer Evaluation without Ciphertext Rotation}
The two-dimensional output generated from a Conv layer is flattened into a one-dimensional vector before being sent to an FC layer as input. Multiplying the input ciphertext and weights, the ConvFHE design \parencite{ConvFHE} can still extract the outputs of FC layers from the results of polynomial multiplication. However, it leverages a generalization of the expensive packLWE algorithm \parencite{PackLWE}. Cheetah \cite{Cheetah} sends the ciphertext multiplication results to the client to have the FC layer outputs extracted, incurring high communication costs. Next, a new FC layer weight matrix packing scheme is proposed, so that the results of FC layers are computed without any ciphertext rotation or high communication cost.

Assume that the FC layer has $n_i$ input neurons and $n_o$ output neurons, and  $N=n_in_o$. If $n_in_o<N$, the weight matrix can be padded with zeros to match the size of $N$. If $n_in_o> N$, then the weight matrix can be decomposed into sub-matrices, each of size smaller than or equal to $N$, and our proposed packing scheme remains applicable to these decomposed matrices.


Denote the weight matrix of the FC layer by $W$. Represent the input, output, and bias vectors of the FC layer by $I$, $R$, and $B$, respectively. Our design proposes to pack the input vector and weight matrix into 
\begin{align}
     &i(X) = \sum_{l=0}^{n_i-1}I_{l}X^{ln_o},\label{eq:i(x) formula}\\
     &w(X) = \sum_{k=0}^{n_o-1}W_{k,0}X^k-\sum_{k=0}^{n_o-1}\sum_{l=1}^{n_i-1}W_{k,n_i-l}X^{ln_o+k}. \label{eq:w(x) formula}
\end{align}

\begin{thm}\label{thm:fc for sparse} $R_k$ equals the coefficient of $X^k$ in $(i(X)w(X)+d(X)) \mod (X^N+1)$, where $d(X) = \sum_{k=0}^{n_o-1}B_kX^k$. 
\end{thm}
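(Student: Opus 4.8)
The plan is to expand the product $i(X)w(X)$ directly from the packings \eqref{eq:i(x) formula} and \eqref{eq:w(x) formula}, reduce it modulo $X^N+1$ (which acts negacyclically, so $X^N \equiv -1$), and read off the coefficient of each target monomial $X^k$ with $0 \le k < n_o$. Since the input is an $n_i$-vector, the sum in \eqref{eq:i(x) formula} effectively runs over $0 \le l < n_i$. The quantity to match is $R_k = (WI)_k + B_k = \sum_{l=0}^{n_i-1} W_{k,l} I_l + B_k$, and because $b(X)$ has only degrees below $n_o$ and contributes $B_k$ to slot $k$ with no interaction, it suffices to show that the coefficient of $X^k$ in $i(X)w(X) \bmod (X^N+1)$ equals $\sum_{l} W_{k,l} I_l$.

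First I would split $w(X)$ into its two summands, writing $i(X)w(X) = P_1(X) - P_2(X)$, where $P_1$ carries the first-column weights $W_{k,0}$ and $P_2$ carries the wrap-around weights. Expanding gives $P_1 = \sum_{l,k} I_l W_{k,0} X^{l n_o + k}$ with $0 \le l < n_i$ and $0 \le k < n_o$, so every exponent lies in $[0, N-1]$ and no reduction occurs. Matching $l n_o + k = k_0$ for a fixed $0 \le k_0 < n_o$ forces $l = 0$ and $k = k_0$, so $P_1$ contributes exactly $I_0 W_{k_0,0}$ to slot $k_0$.

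The core of the argument is $P_2 = \sum_{l,k,l'} I_l W_{k,n_i-l'} X^{(l+l')n_o + k}$ with $0 \le l < n_i$ and $1 \le l' < n_i$. Here the exponents reach up to $2N - n_o - 1 < 2N$, so precisely the terms with $l+l' \ge n_i$ wrap exactly once, becoming $-X^{(l+l'-n_i)n_o + k}$. A term reaches a low slot $k_0 < n_o$ only when $(l+l'-n_i)n_o + k = k_0$, which forces $l+l' = n_i$ and $k = k_0$; the unwrapped terms cannot reach slot $k_0 < n_o$ because there $l+l' \ge 1$ makes the exponent at least $n_o$. Substituting $l' = n_i - l$ (so $l$ runs over $1 \le l \le n_i-1$) rewrites the weight index $n_i - l'$ as $l$, and the negacyclic factor $-1$ cancels the leading minus sign of $P_2$, so each survivor contributes $+I_l W_{k_0,l}$. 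Summing over $1 \le l \le n_i-1$ and adjoining the $P_1$ contribution yields $\sum_{l=0}^{n_i-1} I_l W_{k_0,l} = (WI)_{k_0}$, and adding $b(X)$ delivers $R_{k_0}$.

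The main obstacle is the careful bookkeeping of which monomials survive into the slots $0 \le k < n_o$: I must verify that no term with $k \neq k_0$, and no doubly wrapped term (impossible since the maximal exponent stays below $2N$), contaminates the target slot, and that the sign flip from $X^N \equiv -1$ lines up exactly with the minus sign engineered into \eqref{eq:w(x) formula}. The packing is evidently reverse-designed so that the first weight column $W_{\cdot,0}$ is served by the unwrapped product and the remaining columns by the single negacyclic wrap; confirming this alignment, and that the higher-degree coefficients (which are discarded) cause no interference, is the only substantive step.
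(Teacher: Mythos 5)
Your proof is correct and is essentially the paper's own argument: the paper fixes an output slot $k<n_o$ and evaluates the negacyclic convolution formula $\rho_k=\sum_{l\le k}i_lw_{k-l}-\sum_{l>k}i_lw_{N-l+k}$, which is exactly your bookkeeping of unwrapped terms (yielding $I_0W_{k,0}$) versus singly wrapped terms (yielding $+I_lW_{k,l}$ after the sign from $X^N\equiv-1$ cancels the minus sign built into $w(X)$). The only difference is presentational—you organize by the two summands of $w(X)$ while the paper organizes by the coefficient formula—so the substance and all key steps coincide.
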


\begin{figure}[t]
     \centering
     \includegraphics[scale=0.5]{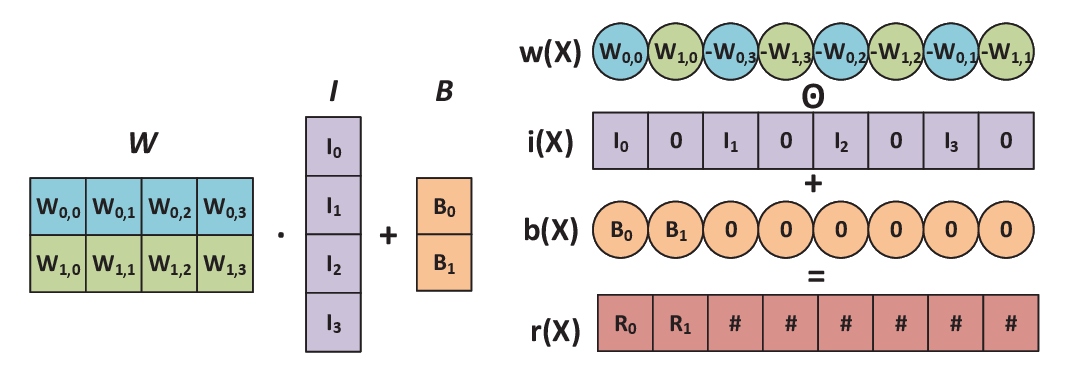}
     \caption{Example of FC layer evaluation using our proposed 
 packing for the case of $n_i=4$, and $n_o=2$. (`\#' denotes invalid slots) }
     \label{fig:fc_eval}
\end{figure}

\begin{proof} Denote the coefficient of $X^k$ in $i(X)$ by $i_k$, and use similar notations for the coefficients in other polynomials. Following polynomial multiplication, the $k$-th coefficient of $i(X)w(X)\mod (X^N+1)$ is
\begin{align}
    \rho_k=\sum_{l=0}^{k}i_lw_{k-l}-\sum_{l=k+1}^{N-1}i_lw_{N-l+k} \label{eq:rho_k formula}.
\end{align}
Only the $\rho_k$ with $0\leq k<n_o$ are needed to compute the FC layer output. From the definition in \eqref{eq:i(x) formula}, $i_0=I_0$ and $i_l=0$ for $1\leq l \leq k < n_o$. Also, $w_k = W_{k,0}$ from \eqref{eq:w(x) formula}. Hence, the first summation on the right hand-side of \eqref{eq:rho_k formula} reduces to $I_0W_{k,0}$. Similarly, from \eqref{eq:i(x) formula}, only $i_l$ with $l=jn_o$ $(j\in Z^+, 1\leq j<n_i)$ are nonzero. Hence, the second summation in \eqref{eq:rho_k formula} becomes
\begin{align*}
    \sum_{l=k+1}^{N-1}i_lw_{N-l+k} = \sum_{j=1}^{n_i-1}i_{jn_o}w_{n_on_i-jn_o+k}.
\end{align*}
From our definition in \eqref{eq:i(x) formula} and \eqref{eq:w(x) formula}, $i_{ln_o} = I_l$ and $w_{n_on_i-ln_o+k}=-W_{k,l}$. Accordingly, \eqref{eq:rho_k formula} reduces to 
\begin{align*}
    \rho_k = I_0W_{k,0}-\sum_{l=1}^{n_i-1}I_l(-W_{k,l}) = \sum_{l=0}^{n_i-1}I_{l}W_{k,l}.
\end{align*}
Adding up with $B_k$, $R_k=\sum_{l=0}^{n_i-1}I_{l}W_{k,l}+B_k$, which equals the $k$-th output of the FC layer. 
\end{proof}

\begin{algorithm}[h]
\small	\caption{Proposed FC evaluation procedure with HE} \label{alg:proposed_fc}
\begin{algorithmic}[1]
        \State \textbf{Client Input}: $s(X), pk, i(X)$
        \State \textbf{Server Input}: $pk,w(X), d(X)$  
        \State \textbf{Server initialization:} 
        \State \indent\hspace{0em} \parbox[t]{\dimexpr\linewidth-\algorithmicindent}{Sample the random polynomials $\rho(X)$ with coefficients following the $\mathbin{U}(0,\Delta)$ distribution;}
        
        \State \indent\hspace{0em} \parbox[t]{\dimexpr\linewidth-\algorithmicindent}{Sample the noise polynomials $e_{w_1}(X)$ and $e_{w_2}(X)$ with coefficients following the $DG(\sigma^2)$ distributions;}
        \State \indent\hspace{0em} \parbox[t]{\dimexpr\linewidth-\algorithmicindent}{Compute $p_1(X) =  \rho(X)b(X) + e_{w_1}(X)$;}
        
        \State \indent\hspace{0em} \parbox[t]{\dimexpr\linewidth-\algorithmicindent}{Compute $p_2(X)=\rho(X)+w(X)+e_{w_2}(x)$;}

        \State \hspace{2em} Send $p_1(X)$ and $p_2(X)$ to the client.\\
        \State \textbf{Encryption (client side)}
        \State \hspace{2em} \parbox[t]{\dimexpr\linewidth-\algorithmicindent}{Compute the $c_0(X)$ of $-i(X)$ encryption according to \eqref{eq:encryption};}
        \State \hspace{2em} Store the random $v(X)$ in the client;
        \State \hspace{2em} Send $c_0(X)$ to the server.
        \State {\bf Evaluation (server side)}
        \State \hspace{2em} $c_0^r(X)\gets \Delta^{-1}(c_0(X)\cdot r(X) - sh(X)\cdot w(X))$
        \State \hspace{2em} \parbox[t]{\dimexpr\linewidth-\algorithmicindent}{\raggedright Sample random polynomials $sh'(X)$ with coefficients following $U(0,\Delta)$ distribution;}
          \State \hspace{2em} \parbox[t]{\dimexpr\linewidth-\algorithmicindent}{\raggedright Store $sh'(X)$ to be used as the $sh(X)$ for the next layer;}
        \State \hspace{2em} $c_0^r(X)\gets c_0^r(X) + d(X) + sh^{\prime}(X)$
        \State  \hspace{2em} Send $c^r_0(X)$ to the client.
        \State {\bf Evaluation \& decryption (client side)}
        \State \hspace{2em} $c_1^{r}(X)\gets \Delta^{-1}(v(X)\cdot p_1(X)+i(X)\cdot p_2(X))$
        \State \hspace{2em} $r(X) \gets c_0^r(X)+c_1^r(X)$
\end{algorithmic}
\normalsize
\end{algorithm} 

\begin{figure*}[t]
		\centering
		\includegraphics[scale=0.37]{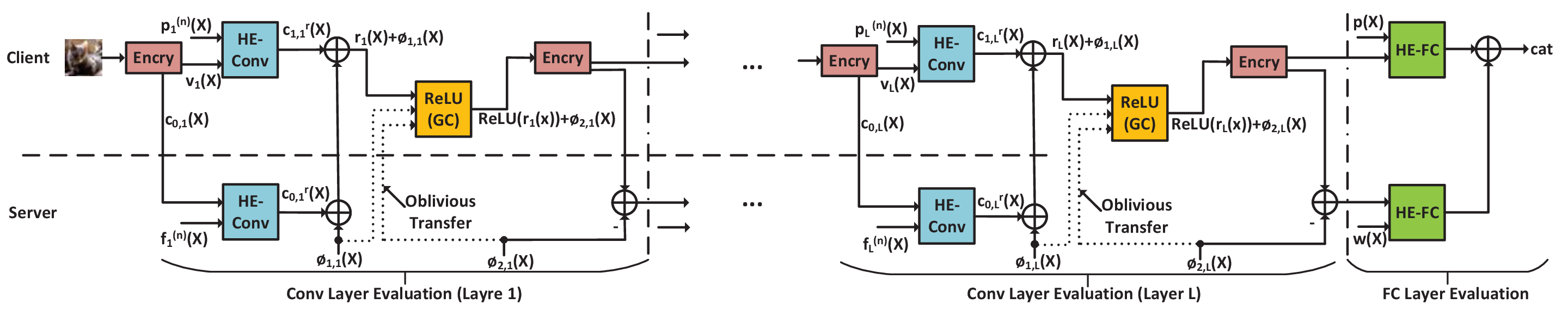}
		\caption{Block diagram of our proposed scheme utilizing VOLE-style OT for evaluating ReLU function.}\label{fig:ReLU_GC}
\end{figure*}

The procedure for evaluating an FC layer using our proposed packing scheme is summarized in Algorithm \ref{alg:proposed_fc}. The weight matrix is packed into the polynomial $w(X)$ as described in \eqref{eq:w(x) formula}. Similar to the evaluation of the proposed Conv layer, the polynomials $p_1(X)$ and $p_2(X)$ need to be computed on the server at initialization and then transmitted to the client to carry out the decryption of the FC layer output. An FC layer takes the output of either a Conv layer or a preceding FC layer as input, where the valid entries are adjacent. However, the valid entries in the FC layer's input must be spaced $n_o$ slots apart. 
Fortunately, decryption is carried out on the client as part of the evaluation of each layer, and the polynomial format can be easily adjusted in the plaintext. Hence, the coefficients of $i(X)$ are packed in the client as per \eqref{eq:i(x) formula}. Since $sh(X)$ is inherently part of $i(X)$, the coefficients of $sh(X)$ must be swapped accordingly on the server side.

Using our proposed packing in \eqref{eq:i(x) formula} and \eqref{eq:w(x) formula}, two polynomial multiplications and three polynomial additions are needed to compute the $c_0^r(X)$ component of the FC layer output, as detailed in Lines 15-18 of Algorithm \ref{alg:proposed_fc}. Next, $c_1^r(X)$ is calculated in the client, as specified in Line 21 of Algorithm \ref{alg:proposed_fc}. Similar to the procedure outlined in Algorithm \ref{alg:proposed_ext} for the Conv layer, $sh'(X)$ is not included in the last FC layer. It can be derived that $r(X) = \Delta^{-1}(i(X)w(X))  + d(X)$, which represents the desired FC layer output. Unlike previous designs, our algorithm does not require any ciphertext rotation. Fig. \ref{fig:fc_eval} illustrates an example of FC layer evaluation using our proposed packing scheme for the case of $n_i=4$ and $n_o=2$. 

For a FC layer with $n_o$ outputs, the first $n_o$ coefficients of $r(X)$ generated by Algorithm \ref{alg:proposed_fc} are the outputs. Therefore, only the first $n_o$ coefficients need to be computed for $c^r_0(X)$ and $c^r_1(X)$. Since only the first $n_o$ coefficients of $c^r_0(X)$ are transmitted from the server to the client, the total communication cost is $N+n_o$ coefficients, including the $N$ coefficients of $c_0(X)$ sent from the client to the server. Although $p_1(X)$ and $p_2(X)$ are sent from the server to the client at initialization, this is done only once for all evaluations and is not included in the communication cost.


\subsection{Activation Function Evaluation with Oblivious Transfer}
A nonlinear activation function typically follows each Conv layer and the most popular activation function is ReLU. Nonlinear functions can not be directly implemented homomorphically. Approximations are adopted in \parencite{ConvFHE, CHET, EVAimproved, TFHE_DNN1} and they cause precision losses on the CNNs. Since the decryption is carried out as part of the proposed evaluation schemes, the millionaire protocol of CryptFlow2 \cite{CrypTFlow2} is utilized to evaluate ReLU with VOLE-style OT \cite{VOLE-OT}, similar to Cheetah \cite{Cheetah}. The overall procedure of evaluating a CNN using our proposed schemes is illustrated in Fig. \ref{fig:ReLU_GC}. 

The evaluation of a Conv layer is carried out according to Algorithm \ref{alg:proposed_ext}. Before the $c_0^r(X)$ of a Conv layer is sent back to the client, it is added with $sh^\prime(X)$, as per Line 23 of Algorithm \ref{alg:proposed_ext}. Then on the client side, $r(X)+sh^\prime(X)$ and $sh^\prime(X)$ are utilized to evaluate ReLU with millionaire protocol via OT \cite{Cheetah, CrypTFlow2, VOLE-OT}. The protocol outputs $\text{ReLU}(r(X))+sh^{\prime\prime}(X)$, which is the input to the next Conv or FC layer. 


\subsection{Security and Noise Analyses}\label{sec:security_analysis}
This section analyzes the potential impacts on the security and noise brought by our proposed schemes. Specifically, our protocol involves sharing polynomials computed from the CNN filters and weights with the client. It is demonstrated that the shared information does not leak any of the data about the CNN to the client. Moreover, our proposed scheme does not require any rotations, thereby eliminating the noise introduced by rotation operations. As a result, our design achieves a lower noise level compared to the original HE scheme.

In our proposed algorithms, the server receives a polynomial $c_0(X)$, which is the first element of the ciphertext obtained from the encryption of the client's input. Under the semantic security of the CKKS HE scheme, the user's data is not exposed to the server. Furthermore, a secret share, $sh(X)$, known only to the server, is added to the evaluation result. Because the evaluation output $i(x)$ has coefficients in the range $[0, \Delta]$, the coefficients of $sh(X)$ must also lie within this range to maintain security \cite{Gazelle}. Accordingly, the client does not have access to the evaluation result of any intermediate layer, which might leak information about the model parameters \cite{FALCON_CVPR}. 

For the Conv layers, our proposed scheme also transmits the polynomials $p^{(n)}_1(X) = \rho^{(n)}(X) b(X)+ e^{(n)}_{1}(X)$ and $p_2^{(n)}(X) = \hat f^{(n)}(X)+\rho^{(n)}(X)+e_2^{(n)}(X)$ to the client. In our proposed packing scheme, shown in Fig. \ref{fig:batch_conv_example}, zeros are inserted into $\hat f^{(n)}(X)$ and their locations are known to the clients. 
\begin{thm}
    The nonzero coefficient of $\hat f^{(n)}(X)$ can not be recovered from either $p_1^{(n)}(X)$ or $p_2^{(n)}(X)$. 
\end{thm}
\begin{proof}
    Apparently, the coefficients of $\rho^{(n)}(X)$ can not be recovered from $p_2^{(n)}(X)$ when the corresponding coefficients in $\hat f^{(n)}(X)$ is nonzero.
    Next, it will be shown that $\rho^{(n)}(X)$ cannot be recovered from $p_1^{(n)}(X)$. Since the coefficients of $\rho^{(n)}(X)$ are sampled uniformly at random from $\mathcal{U}(0, \Delta)$ and the coefficients of $\hat{f}^{(n)}(X)$ also lie within the range $[0, \Delta]$, it follows that the nonzero coefficients of $\hat{f}^{(n)}(X)$, which are the kernel coefficients, cannot be recovered from $p_2^{(n)}(X)$ due to the secret sharing concept \cite{Gazelle}.

     Let us denote the $i$-th coefficient of a polynomial $a(X)$, by $a_i$. Suppose that $\hat f^{(n)}_i$ is zero. Then, the corresponding coefficient of $p^{(n)}_2(X)$ becomes
\begin{align*}
    p^{(n)}_{2,i} = \hat f^{(n)}_i + \rho^{(n)}_i + e^{(n)}_{2,i} = \rho^{(n)}_i + e^{(n)}_{2,i}.
\end{align*}
The coefficients of $e_2^{(n)}(X)$ follow a normal distribution with variance $\sigma^2$. If $e_{2,i}^{(n)}$ has large magnitude comparable to $\Delta$, then apparently $\rho^{(n)}_i$ can not be recovered from $p^{(n)}_{2,i}$. For the other case, assume that $e_{2,i}^{(n)}$ has $\delta$ bits. It only affects the $\delta$ least significant bits (LSBs) of $p^{(n)}_{2,i}$. Denote the other bits by most significant bits (MSBs). Then
\begin{align}
    \text{MSB}(p^{(n)}_{2,i}) = \text{MSB}(\rho^{(n)}_i + e^{(n)}_{2,i}) = \text{MSB}(\rho^{(n)}_i). \label{eq: MSB P_2}
\end{align}

 Break up $\rho^{(n)}(X)$ into the sum of $\rho^{(n)}_1(X)$ and $\rho^{(n)}_2(X)$ as follows
\begin{align*}
    \rho^{(n)}_{1,i} &= \begin{cases}
        \text{MSB}(\rho^{(n)}_i)\times 2^{\delta}  & \text{if } f^{(n)}_i = 0 \\
        0 & \text{otherwise}
        \end{cases}, \\
    \rho^{(n)}_{2,i} &= \begin{cases}
        \text{LSB}(\rho^{(n)}_i) & \text{if } f^{(n)}_i = 0 \\
        \rho^{(n)}_i & \text{otherwise}
        \end{cases}.
\end{align*}
For $f_i^{(n)}=0$, the MSB of $\rho^{(n)}_i$ are known from the MSB of $p^{(n)}_{2,i}$ using \eqref{eq: MSB P_2} and accordingly $\rho^{(n)}_{1,i}$ is known. However, the polynomial $\rho_2^{(n)}(X)$ is unknown. In particular, when $f_i^{(n)}\neq 0$, the coefficient $\rho_{2,i}^{(n)}$ is unknown. This is because that, substituting $\rho^{(n)}(x) = \rho^{(n)}_1(X) + \rho^{(n)}_2(X)$ into the formula of $p_1^{(n)}(X)$, it can be derived that 
\begin{align*}
    p^{(n)}_1(X) - \rho^{(n)}_1(X)b(X) = \rho^{(n)}_2(X)b(X) + e^{(n)}_1(X).
\end{align*}
Since the entropy of $\rho_2^{(n)}(X)$ is sufficiently high \parencite{LWE3}, it can not be recovered from $p_1^{(n)}(X)$ according to the hardness of the (R)LWE problem \parencite{LWE1, LWE2, RLWE1}. Moreover, since $\rho^{(n)}_i=\rho_{2,i}^{(n)}$ for the case of $f^{(n)}_i\neq 0$, $\rho^{(n)}_i$ can not be recovered from $p_1^{(n)}(X)$.
\end{proof}

Similarly, it can be proved that $p_1(X)$ and $p_2(X)$ used for evaluating the fully connected (FC) layers do not reveal any information about the model to the client.

From \eqref{eq:encryption}, noise is introduced to hide the message polynomial in the encryption process by adding the random polynomials $e_0(X)$ and $e_1(X)$ with variance $\sigma^2$. The proposed scheme does not change the noise that is added to $c_0(X)$ for encryption. According to Lines 6, 7, 16, 17, 27, 28 of Algorithm \ref{alg:proposed_ext} and \eqref{eq: c0+c1}, the noise of $r(X)$ computed in Line 32 of Algorithm \ref{alg:proposed_ext} equals
\begin{align}
   &\Delta^{-1}(\rho^{(n)}(\!X\!)e_1(\!X\!)\!+\!v(\!X\!)\rho^{(n)}(\!X\!)e(\!X\!)
    \!+\!v(\!X\!)e_{1}^{(n)}(\!X\!) \nonumber\\
    &\quad+i(X)e_2^{(n)}(X)+sh(X)e_2^{(n)}(X)). \nonumber
\end{align}
Following the heuristic analysis from \cite{CKKS} and based on the approach in \cite{AES, HE_noise}, the error in our protocol is bounded by
\begin{align}
     8N\sigma\sqrt{\frac{N}{6}}+ 8(\sqrt{3}+\sqrt{2})\sigma N. \label{eq: noise ours}
\end{align}
It is much lower than $\Delta$, and hence does not affect the correct evaluation result in $r(X)$. A similar analysis applies to the noise in our proposed FC layer evaluation scheme.

In the original scheme of ConvFHE \cite{ConvFHE}, the kernel plaintexts are first multiplied by the input ciphertexts. Following a similar analysis, its noise is bounded by
\begin{align}
     8N\sigma\sqrt{\frac{N}{3}} + 8 (\frac{1}{\sqrt{3}} + \sqrt{\frac{\eta}{3}})\sigma N, \label{eq: noise origin}
\end{align}
where $\eta$ is the Hamming weight of the secret key, which is typically 192 for various settings. Comparing \eqref{eq: noise ours} and \eqref{eq: noise origin}, our proposed approaches have lower noise. Additionally, ciphertext rotations are needed in ConvFHE to pack the valid slots into a single ciphertext. The multiplication with rotation keys further scale up the noise. 


\section{Experimental Results and Comparisons}
In this section, our proposed design is evaluated for one individual Conv layer as well as four variants of plain-20 classifier on CIFAR-10/100 datasets and ResNet-50 over ImageNet. Our design is compared with the classic Gazelle \parencite{Gazelle}, Cheetah \parencite{Cheetah}, and ConvFHE \parencite{ConvFHE}, which are among the most efficient existing designs. To facilitate a fair comparison with ConvFHE, which was originally implemented in Go, the Lattigo library  \parencite{lattigoLibrary}, which supports various HE schemes in Go, was utilized to integrate our proposed scheme into the existing framework. Gazelle and Cheetah were implemented in C++. All experiments were conducted by running a single thread on the Owens Cluster of Ohio Supercomputer Center \parencite{OSC} with 128GB of memory and an Intel Xeon E5-2680 V4 processor-based supercomputer. 


\subsection{Individual Layer Evaluation and ReLU Implementation}
To achieve 128-bit security and allow one level of multiplication between each bootstrapping, $Q$ is set to 104 bits for the CKKS scheme in our design following the HE parameters specified in \parencite{Bootstrapping4}. Besides, after each multiplication, the polynomial coefficients are scaled down by a factor to reduce noise, resulting in a new modulus $Q'$ with 55 bits. The same number of bits can be utilized for the polynomial coefficients in the Gazelle \parencite{Gazelle}, Cheetah \parencite{Cheetah}, and ConvFHE \parencite{ConvFHE} designs.


\subsubsection{Conv Layer Evaluation}
The complexities of carrying out one Conv layer with $c_i$ input channels, $c_o$ output channels, filter size $f_w=f_h=f$, input size $w_i=h_i=w$, and length-$N$ ciphertext polynomials on one input $i(X)$ using our proposed design are listed in Table \ref{tab: HE Conv Cost}. It is assumed that $N=c_iw^2$. In our design, the complexity is divided between the server and the client. All input data associated with one output channel is packed into one ciphertext and all associated filter coefficients are also packed into one plaintext polynomial. Besides, only the $c_0(X)$ element of the input data ciphertexts is transmitted to the server. 
\begin{table}[t]
\begin{center}\caption{Complexity of homomorphic evaluation of a Conv layer with $c_i$ input channels, $c_o$ output channels, input size $w_i=h_i=w$, filter size $f_w=f_h=f$, and length-$N=c_iw^2$ ciphertext polynomials.}
\label{tab: HE Conv Cost}
\begin{tabular}{@{}c@{}|@{}c@{}|@{}c@{}|@{}c@{}|@{}c@{}}
\hline
\diagbox[width=8em, height=3.5em]{Complexity}{Method} & Gazelle \parencite{Gazelle} & Cheetah \parencite{Cheetah} & ConvFHE \parencite{ConvFHE} & 
\begin{tabular}{@{}c@{}} 
Proposed
\\ 
(Server side,\\
Client side)
\end{tabular}\\ \hline
\# of NTTs & $4f^2$ & $2$ & $2$ & $(1, 3)$\\ \hline
\# of INTTs & $4c_o$ & $2c_o$ & $2c_o$ & $(c_o, c_o)$\\ \hline
\# of CWMs & $4f^2c_o$ & $2c_o$ & $2c_o$ & $(2c_o, 2c_o)$\\ \hline
\# of rotations & 
\begin{tabular}{@{}c@{}} 
$2(f^2\!-\!2\!+\!c_o)$  
\end{tabular}
& $0$ & $c_o-1$ & $(0,0)$ \\ \hline
\# of data in a poly. & $N/2$ & $N$ & $N$ & $N$ \\ \hline
\begin{tabular}{@{}c@{}} 
Memory (\# of
\\ 
length-$N$ poly.)
\end{tabular}
& 
\begin{tabular}{@{}c@{}} 
$f^2c_o+$\\
$2(f^2\!-\!2\!+\!c_o)$
\end{tabular}
& \begin{tabular}{@{}c@{}} $c_o$  \end{tabular} & \begin{tabular}{@{}c@{}} $2\log_2^{c_i}+c_o$ \end{tabular} & $(2c_o\!\!+\!\!1,2c_o\!\!+\!\!1)$ \\ \hline
\end{tabular}
\end{center}
\end{table}

    Using NTT, long polynomial multiplications can be simplified to coefficient-wise multiplications (CWMs). For the encryption, the NTTs of $a(X)$ and $b(X)$, which are the elements of the public key $pk$, are pre-computed. After NTT is applied to $v(X)$, CWMs are carried out between $NTT(b(X))$ and $NTT(v(X))$. The result is then added to $NTT(\Delta i(X))$ and $NTT(e_0(X))$ to derive $NTT(c_0(X))$. On the server side, the NTTs of $\rho^{(n)}(X)$ and $\hat{f}^{(n)}(X)$ are precomputed and shared among the evaluation of different inputs. The NTT of $sh(X)$ is also computed. Then, each of the $2c_o$ polynomial multiplications required in Line 16 of Algorithm \ref{alg:proposed_ext} is carried out as CWMs. On the other hand, INTT needs to be carried out to recover $c_0^{(n)}(X)$ before its coefficients are repacked into $c_0^r(X)$ in Line 20 of Algorithm \ref{alg:proposed_ext}. The server also computes the NTTs of $p^{(n)}_1(X)$ and $p^{(n)}_2(X)$ before they are sent to the client. Since $NTT(i(X))$ and $NTT(v(X))$ are already available from encryption on the client, the NTT of $c_1^{(n)}(X)$ specified in Line 27 of Algorithm \ref{alg:proposed_ext} can also be completed via CWMs. Similarly, INTT is applied to recover $c_1^{(n)}(X)$ before its coefficients are packed into $c_1^r(X)$. As discussed previously, only a subset of the $c_0^{(n)}(X)$ and $c_1^{(n)}(X)$ coefficients are needed, and their corresponding INTT can be simplified. For the polynomials whose values do not change with input or not re-generated with every input, including $a(X)$, $b(X)$, $p_1^{(n)}(X)$, $p_2^{(n)}(X)$, $\rho^{(n)}(X)$ and $\hat f^{(n)}(X)$, the NTTs only need to be computed once and stored. Hence, their complexities are excluded from Table \ref{tab: HE Conv Cost}.

On the server side, the NTTs of $\hat f^{(n)}(X)$ for filter coefficients, the random polynomials $\rho^{(n)}(X)$, and $sh(X)$ need to be stored. It is assumed that the NTTs of $p_1^{(n)}(X)$ and $p_2^{(n)}(X)$ are stored on the client. Besides, $NTT(v(X))$ must be retained from the time it is generated during the encryption process until it is used in the computations at Line 27 of Algorithm~\ref{alg:proposed_ext}. As a result, the memory requirements of our proposed design are listed as in Table~\ref{tab: HE Conv Cost}.  

The complexities of prior designs are also included in Table \ref{tab: HE Conv Cost}. In the proposed design, ciphertext rotations are entirely eliminated. Additionally, the complexity of polynomial multiplications is significantly reduced. The major reason is that only one instead of two polynomials are sent for each ciphertext for related computations and only one in every $c_i$ coefficient needs to be computed in the resulting product polynomials, as shown in Lines 20 and 31 of Algorithm \ref{alg:proposed_ext}. The polynomial multiplication complexity and memory requirement of Gazelle are in a higher order because they use a different packing scheme. Although the Cheetah design does not require rotation, it needs to compute every coefficient in the product polynomials and has a very high communication cost. The ConvFHE design packs the data for different output channels into the same ciphertext to address the communication cost issue of Cheetah, but it introduces ciphertext rotations and additional polynomial multiplications.

\begin{figure}[t]
		\centering
		\includegraphics[scale=0.7]{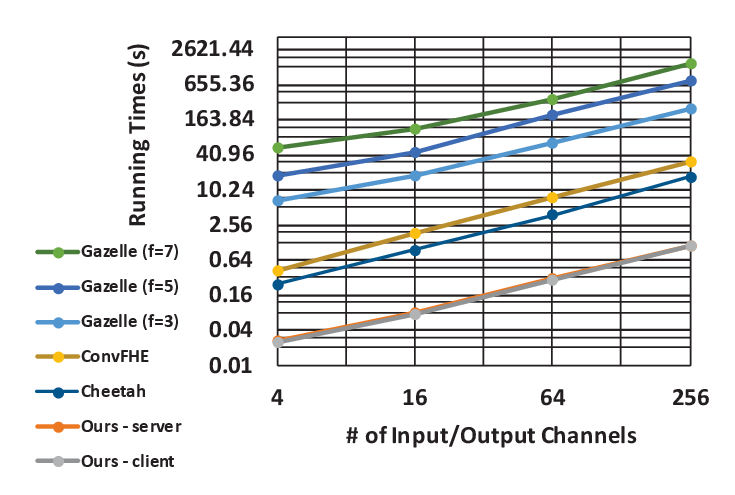}
		\caption{Latency of Conv layer evaluations.} \label{fig:batch_conv_compare}
\end{figure}

Our design for evaluating a single Conv layer is simulated and the latency is shown in Fig. \ref{fig:batch_conv_compare} for different number of input/output channels. In our simulations, $N=2^{16}$. It is assumed that $c_i=c_o$, $h_i=w_i=w$, and $c_iw^2=N$. Hence, $w$ is adjusted according to different $c_i$ in our simulations. The latencies of the computations on the server and client sides in our proposed design are shown separately in Fig. \ref{fig:batch_conv_compare} and they do not change with the filter size, $f$. For comparison, the Gazelle, Cheetah, and ConvFHE designs are also simulated over the same platform. Similar to that of our design, the latencies of Cheetah and ConvFHE schemes do not depend on $f$. However, the latency of Gazelle increases with larger $f$ because it packs different rows of the filters into separate polynomials. Besides, since Gazelle only packs $N/2$ data into one polynomial, it was iterated twice in the simulation to process the same amount of input data. Overall, the ConvFHE scheme exhibits a latency approximately $8\times$ to $13\times$ longer than that of our design on both the server and client side as $c_i/c_o$ increases from 4 to 256. Its longer latency is due to the expensive packLWE algorithm and many polynomial multiplications. Cheetah has around $5\times$ to $7.5\times$ longer latency than our design for $c_i/c_o$ in the same range. 
The Gazelle design for kernel sizes $f=3$, $f=5$, and $f=7$ have $134\times$, $358\times$, and $1057\times$, respectively, longer latency when $c_i=c_o=4$ compared to our proposed scheme due to the large number of ciphertext rotations. The relative latency is also becoming longer for larger $c_i/c_o$. Because of Gazelle’s high latency for linear layer evaluation, its performance is not considered in the evaluation of CNN classifiers in this paper.

\begin{table}[h]
\begin{center}\caption{Communication cost of evaluating ReLU function after Conv layer with $c_i$ input channels, $c_o$ output channels, data input size $w_i=h_i=w$, filter size $f_w=f_h=3$, $N=2^{13}$, $\lceil log_2Q\rceil=104$, and $\lceil log_2Q'\rceil=55$.}
\label{tab: HE ReLU comm cost}
\begin{tabular}{@{}c@{}|@{}c@{}|@{}c@{}|@{}c@{}|@{}c@{}}\hline
Parameters & Gazelle \parencite{Gazelle} &Cheetah \parencite{Cheetah} & ConvFHE \parencite{ConvFHE} & Proposed \\\hline \hline
$w,c_i,c_o$ & 
\begin{tabular}{@{}c@{}} 
$2N(\lceil\frac{2w^2c_i}{N}\rceil$
\\
$\lceil\log_2{Q}\rceil$ 
\\ 
$+\lceil\frac{2w^2c_o}{N}\rceil$ \\
$\lceil\log_2{Q'}\rceil)$
\end{tabular}
& \begin{tabular}{@{}c@{}} 
$2N(\lceil\frac{w^2c_i}{N}\rceil$ \\
$\lceil\log_2{Q}\rceil$\\ 
$+ c_o \lceil\log_2{Q'}\rceil)$ \end{tabular} 
& \begin{tabular}{@{}c@{}} 
$2N(\lceil\frac{w^2c_i}{N}\rceil$
\\
$\lceil\log_2 Q\rceil$
\\ 
$ +\lceil\frac{w^2c_o}{N} \rceil$
\\
$\lceil\log_2{Q'}\rceil)$ \end{tabular}
& 
\begin{tabular}{@{}c@{}} 
$N\lceil\frac{w^2c_i}{N}\rceil$ \\ $\lceil\log_2 Q\rceil$
\\
$+ \lceil w^2c_o \frac{c_o}{c_i}\rceil$ \\ $\lceil\log_2{Q'}\rceil$
\end{tabular} \\ \hline
$7,256,256$ & $1.3$MB & $29.26$MB & $0.65$MB & $0.3$MB\\ \hline 
$15,128,128$ & $2.6$MB & $15.27$MB & $1.3$MB & $0.62$MB\\ \hline 
$31,64,64$ & $5.21$MB & $8.91$MB & $2.61$MB & $1.27$MB \\ \hline 
$63,32,32$ & $10.42$MB & $7.01$MB & $5.21$MB & $2.58$MB \\\hline
\end{tabular}
\end{center}
\end{table} 

\subsubsection{ReLU Implementation}
In both our proposed design and Cheetah \parencite{Cheetah}, the ReLU function is evaluated using the millionaire protocol, which relies on VOLE-style OT for secure communication \cite{Cheetah, CrypTFlow2, VOLE-OT}. Gazelle \parencite{Gazelle} utilizes GCs to compute the non-linear layers. The ConvFHE design has two variations and can implement the ReLU using either high-order polynomial approximation or 2-PC protocols. Similar to Cheetah, ConvFHE evaluates ReLU using OT in the 2-PC setting. The 2-PC protocols take the decrypted result of the Conv layer as input, regardless of the evaluation scheme.  A significant amount of the communication between the client and server is associated with the ciphertexts transferred between the server and client. Table \ref{tab: HE ReLU comm cost} summarizes the communication costs under various parameter settings used in the Conv layers of the plain-20 and ResNet-50 classifiers.

The formulas in the second row of Table \ref{tab: HE ReLU comm cost} consist of two components each. The first represents the number of megabytes of data transmitted from the client to the server, and the second represents the data transmitted from the server to the client. To show the relative values, the communication costs for various $w$, $c_i$, and $c_o$ are listed in the other rows of Table \ref{tab: HE ReLU comm cost} for the case that the CKKS algorithm is adopted. As mentioned earlier, to achieve 128-bit security, the input polynomials have $\lceil log_2Q\rceil=104$-bit coefficients, and the output polynomials of the Conv layer have $\lceil log_2Q'\rceil=55$-bit coefficients. In our proposed design, the data from $\lfloor N/w^2\rfloor$ channels are packed into the same polynomial. The polynomials are long, and many slots remain unused when the channel number is small. When the number of channels is larger, the slots in the polynomials are more efficiently used, and the total number of polynomials involved is reduced. As a result, the communication cost of our design reduces as the number of channels increases.

\begin{table}[t]
\begin{center}\caption{Complexity of homomorphic evaluation of a FC layer with input size $n_i$, output size $n_o$, and length-$N$ ciphertext polynomial.}
\label{tab: HE FC Cost}
\begin{tabular}{c|@{}c@{}|@{}c@{}|@{}c@{}|@{}c@{}}
\hline
\diagbox[width=7.4em, height=3.5em]%
{\hspace*{-0.8em}\makebox[0pt][l]{Complexity}}%
{\makebox[0pt][r]{Method}\hspace*{-0.7em}} 
& Gazelle \parencite{Gazelle} & Cheetah \parencite{Cheetah} & ConvFHE \parencite{ConvFHE} &
\begin{tabular}{@{}c@{}} 
Proposed
\\ 
(Server side,\\
Client side)
\end{tabular}\\ \hline
\# of NTTs & 
\begin{tabular}{@{}c@{}}
     $2$
\end{tabular}
&
\begin{tabular}{@{}c@{}}
     $2$
\end{tabular}
&
\begin{tabular}{@{}c@{}}
     $2$
\end{tabular}
&
$(1,3)$
\\ \hline
\# of INTTs & 
\begin{tabular}{@{}c@{}}
     $2\lceil\frac{n_o}{\lfloor N/2n_i\rfloor}\rceil$
\end{tabular}
&
\begin{tabular}{@{}c@{}}
     $2\lceil\frac{n_o}{\lfloor N/n_i\rfloor}\rceil$
\end{tabular}
&
\begin{tabular}{@{}c@{}}
     $2\lceil\frac{n_o}{\lfloor N/n_i\rfloor}\rceil$
\end{tabular}
&
\begin{tabular}{@{}c@{}}
     $(\lceil\frac{n_o}{\lfloor N/n_i\rfloor}\rceil,$\\
     $\lceil\frac{n_o}{\lfloor N/n_i\rfloor}\rceil)$
\end{tabular}
\\ \hline
\# of CWMs & 
\begin{tabular}{@{}c@{}} 
$2N\times$
\\ 
 $\lceil\frac{n_o}{(\lfloor N/2n_i\rfloor)}\rceil$
\end{tabular} 
&
\begin{tabular}{@{}c@{}} 
$2N\times$
\\ 
 $\lceil \frac{n_o}{(\lfloor N/n_i\rfloor)}\rceil$
\end{tabular}
&
\begin{tabular}{@{}c@{}} 
$2N\times$\\
$\lceil \frac{n_o}{(\lfloor N/n_i\rfloor)}\rceil$
\end{tabular} & 
\begin{tabular}{@{}c@{}}
     $(2N\lceil\frac{n_o}{\lfloor N/n_i\rfloor}\rceil,$\\
     $2N\lceil\frac{n_o}{\lfloor N/n_i\rfloor}\rceil)$
\end{tabular}\\ \hline
\# of rotations & 
\begin{tabular}{@{}c@{}} 
$\lceil\frac{n_o}{(\lfloor N/n_i\rfloor)}$
\\
$-1+$
\\ 
$\log_2{\frac{N}{n_o}}\rceil$
\end{tabular}
& 
\begin{tabular}{@{}c@{}} 
$0$
\end{tabular}
& $2^{\lceil\!  \log_2{\!\frac{n_o}{(\!\lfloor\! N\!/\!n_i\!\rfloor\!)}\!}\!\rceil}$
& $(0,0)$ \\ \hline
\begin{tabular}{@{}c@{}} 
Memory \\
(\# of
\\ 
coeff.)
\end{tabular}
& 
\begin{tabular}{@{}c@{}} 
$N\times$
\\
$(\lceil\frac{3n_o}{(\lfloor N/n_i\rfloor)}$
\\ 
$-2+2\times$\\
$\log_2{\frac{N}{n_o}}\rceil)$
\end{tabular}
& 
\begin{tabular}{@{}c@{}} 
$N\times$
\\
$\lceil\frac{n_o}{(\lfloor N/n_i\rfloor)}\rceil$
\end{tabular}
& 
\begin{tabular}{@{}c@{}} 
$N\times$\\
$\lceil\frac{n_o}{(\lfloor N/n_i\rfloor)}\rceil$ \\
$+2N\times$\\
$\lceil\!  \log_2{\!\frac{n_o}{(\!\lfloor\! N\!/\!n_i\!\rfloor\!)}\!}\!\rceil$
\end{tabular}
&
\begin{tabular}{@{}c@{}} 
$(2N\lceil \frac{n_o}{(\lfloor N/n_i\rfloor)}\rceil,$
\\ 
$2N\lceil \frac{n_o}{(\lfloor N/n_i\rfloor)}\rceil$
\\
$+n_o)$
\end{tabular}\\ \hline
\end{tabular}
\end{center}
\end{table}

Compared to the ConvFHE scheme, our approach requires sending only one polynomial per ciphertext between the server and client, instead of two. Therefore, the communication cost is reduced to almost half. Gazelle packs half of the input channels into each polynomial compared to our design, and hence it requires around 4 times the communication cost. Cheetah sends $c_o$ instead of one ciphertext from the server to the client, leading to much higher communication overhead, which increases almost linearly with $c_o$ as $c_o$ grows.

\subsubsection{FC Layer Evaluation}
The complexity of evaluating one FC layer with an input size of $n_i$, output size of $n_o$, and a  length-$N$ ciphertext polynomial is presented in Table \ref{tab: HE FC Cost}. This complexity is derived using an analysis similar to that of the Conv layer. Notably, our design does not require ciphertext rotations. In our design, $\lfloor N/n_i\rfloor $ rows of the weight matrix are packed into a single polynomial $w(X)$. Therefore, the entire weight matrix can be packed into $\lceil n_o/(\lfloor N/n_i\rfloor)\rceil$ polynomials. 

The complexities of FC layer evaluation using the Gazelle \parencite{Gazelle}, Cheetah \parencite{Cheetah}, and ConvFHE \parencite{ConvFHE} designs are also listed in Table \ref{tab: HE FC Cost}. Overall, our design requires a smaller number of NTTs and INTTs compared to the prior works. Besides, Gazelle and ConvFHE require many expensive ciphertext rotations. Although our design demands approximately twice the storage compared to Cheetah and ConvFHE, this is not a major concern, as storage resources are relatively inexpensive.


\begin{table*}[t]
\begin{center}
\caption{Running time and communication cost comparisons of homomorphically encrypted variants of plain-20 and ResNet-50 classifiers over CIFAR-10/100 and ImageNet datasets, respectively, where $f$, $d$, and $w$ are the parameters of the classifiers indicating filter size, number of layers, and wideness factor, respectively.}\label{tab: CNN arcs result}
\begin{tabular}{c|@{}c@{}|c|c|c|c|c}
\hline
\multicolumn{2}{c|}{Classifier Architecture} & f3-d20-w1 & f5-d8-w3 & f3-d14-w3 & f3-d20-w3 & ResNet-50  \\
\hline\hline
\multirow{2}{*}{Accuracy (\%)} & 
\begin{tabular}{@{}c@{}} ConvFHE (approx. ReLU) \parencite{ConvFHE} \end{tabular} & 
\begin{tabular}{@{}c@{}} 90.32/64.08 \end{tabular} &
\begin{tabular}{@{}c@{}} 92.04/69.05 \end{tabular} &
\begin{tabular}{@{}c@{}} 93.99/73.47 \end{tabular} &
\begin{tabular}{@{}c@{}} 94.12/72.65 \end{tabular} & -
\\
\cline{2-7}
& \begin{tabular}{@{}c@{}} ConvFHE (OT ReLU) \& Proposed (OT ReLU) \end{tabular} &
90.39/64.13 &
92.37/69.57 &
94.04/73.65 &
94.30/72.95 & -
\\
\hline\hline
\multirow{8}{*}{Latency (s)} & 
Cheetah (Conv Layers without ReLU) \cite{Cheetah} & 38.2 & 68.1 & 150.1 & 232 & 1250
\\
& Cheetah (FC Layers without ReLU) \cite{Cheetah} & 1.8 & 1.9 & 1.9 & 2 & 8
\\
\cline{2-7}
& \begin{tabular}{@{}c@{}} 
ConvFHE (Conv Layers without ReLU)\parencite{ConvFHE}
\end{tabular} 
& 85.8 & 129.5 & 228.6 & 342.7 & -
\\
& \begin{tabular}{@{}c@{}} 
ConvFHE (FC Layers without ReLU)\parencite{ConvFHE} 
\end{tabular} 
& 4.2 & 4.5 & 4.4 & 4.3 & -
\\
\cline{2-7}
&\begin{tabular}{@{}c@{}} 
Proposed (Conv Layers without ReLU, Server)
\end{tabular} & 7 & 8.7 & 19.9 & 33.8 & 290 \\
&\begin{tabular}{@{}c@{}} 
Proposed (Conv Layers without ReLU, Client)
\end{tabular} & 7.3 & 8.5 & 20.2 & 34.5 & 292.7\\
&\begin{tabular}{@{}c@{}} 
Proposed (FC Layers without ReLU, Server)
\end{tabular} & 0.5 & 0.6 & 0.6 & 0.7 & 2 \\
&\begin{tabular}{@{}c@{}} 
Proposed (FC Layers without ReLU, Client)
\end{tabular} & 0.6 & 0.7 & 0.7 & 0.9 & 2.3 \\
\hline \hline
\multirow{6}{*}{Communication cost (MB)} & 
\begin{tabular}{@{}c@{}} Cheetah (OT ReLU, Conv Layers) \parencite{Cheetah} \end{tabular} & 301.9MB & 295.4MB & 570MB & 843.6MB & 10.97GB \\
 & 
\begin{tabular}{@{}c@{}} Cheetah (OT ReLU, FC Layers) \parencite{Cheetah} \end{tabular} & 4.3MB & 7.4MB & 8.9MB & 8.9MB & 3.2MB \\
\cline{2-7}
& 
\begin{tabular}{@{}c@{}} ConvFHE (OT ReLU, Conv Layers) \parencite{ConvFHE} \end{tabular} & 45.8MB & 22.5MB & 42.1MB & 60.8MB & - \\
& 
\begin{tabular}{@{}c@{}} ConvFHE (OT ReLU, FC Layers) \parencite{ConvFHE} \end{tabular} & 2MB & 2MB & 2MB & 2MB & - \\
\cline{2-7}
& \begin{tabular}{@{}c@{}} Proposed (Conv Layers) \end{tabular} 
& 23.1MB & 14.1MB & 26.6MB & 38.1MB & 640.2MB \\
& \begin{tabular}{@{}c@{}} Proposed (FC Layers) \end{tabular} 
& 1.6MB & 1.6MB & 1.6MB & 1.6MB & 0.8MB \\
\hline
\end{tabular}
\end{center}
\end{table*}

\subsection{Evaluation of CNN Classifiers}
To further evaluate our proposed design, it is applied to the plain-20 classifier over the CIFAR-10/100 datasets, and the ResNet-50 classifier over ImageNet \parencite{CNN_plain_20}. These classifiers are the deepest model tested in ConvFHE \parencite{ConvFHE} and Cheetah \cite{Cheetah}. The results are presented in Table \ref{tab: CNN arcs result}. Detailed information regarding the classifier parameters can be found in \parencite{CNNwide, CNN_plain_20}. As specified in the table, the classifiers were selected for their high accuracy.


Since our proposed design implements the ReLU function over decrypted data using OT, there is no accuracy loss compared to the original classifiers. The ConvFHE design approximates ReLU by performing computations on ciphertexts. Using higher-order polynomials for the approximation reduces accuracy loss, but also increases computational complexity. The latency of OT itself is negligible. However, the latency associated with transferring data between the client and server depends on various factors, such as network bandwidth and protocol. Hence, to evaluate the latency of the proposed scheme and those of previous designs implementing ReLU with OT, the running times of individual Conv and FC layers are separately added up and listed in Table \ref{tab: CNN arcs result}. It should be noted that the computations carried out on the client side overlap with those done on the server side. Hence, the running time of our proposed design may be shorter than the sum of the running time on the client and server sides, depending on the network latency and whether the server and client are available. Averaged over the four variants of plain-20 classifier, the running time of ConvFHE and Cheetah are $11.8\times$ and $6.7\times$, respectively, longer than that of our design. On the other hand, our proposed design also has a much lower communication bandwidth requirement. The data presented in Table \ref{tab: CNN arcs result} is the sum of the communication cost for individual Conv and FC layers, as well as the cost of evaluating ReLU through OT. It can be observed that the overall communication cost of our design is around 60\% and 6.7\%  compared to ConvFHE and Cheetah, respectively, using VOLE-style OT. Our design can also achieve a substantial reduction in the latency and communication cost of heavier networks. For instance, our design has $4.3\times$ shorter latency and $17.1\times$ lower communication cost compared to Cheetah for evaluating the ResNet-50 classifier. Since the ConvFHE did not report the latency of heavier networks, our design is not compared with ConvFHE for the ResNet-50 classifier. The communication latency depends on various factors, including bandwidth. As the primary contribution of this paper lies in optimizing the linear layers, our simulations specifically focus on these components. If end-to-end simulations, including communication overhead, were conducted across complete CNNs, the latency improvements achieved by our proposed design would likely be even more significant.

 
\section{Conclusions}
This paper proposed novel techniques to substantially reduce the complexity of homomorphically encrypted CNNs over the cloud. The expensive ciphertext rotations in the Conv and FC layers are completely eliminated by utilizing a reformulated server-client joint computation procedure and a new filter coefficients packing scheme without sacrificing the security of HE or increasing the noise level in the resulting ciphertexts. Besides, only selected coefficients from one polynomial are sent between the server and client for each ciphertext. This not only substantially reduces the complexity of the associated polynomial multiplications but also lowers the required communication bandwidth. Simulation results show that the running time of the Conv and FC layers in popular CNNs is reduced by at least a dozen times compared to the best prior design. Future research will focus on developing low-complexity methods for ReLU evaluation and further simplifying the bootstrapping process.

\printbibliography

\end{document}